\def\url@leostyle{%
  \@ifundefined{selectfont}{\def\UrlFont{\sf}}{\def\UrlFont{\small\ttfamily}}}
\newtheorem{example}{Example}
\newtheorem{theorem}{Theorem}
\newif\ifpdf
\begin{document}

\title{An index for regular expression queries:\\
Design and implementation
}
\subtitle{[Experiment report]
\titlenote{A short version of this paper is published in the 
20th ACM Conference on Information and Knowledge Management (CIKM2011) 
at \textit{at http://www.cikm2011.org/ and ACM DL}
}}

%
%
%
%
%

\numberofauthors{2} 
%
\author{
%
%
\alignauthor
Dominic Tsang\\
       \affaddr{University of Sydney}\\
       \affaddr{City Road}\\
       \affaddr{Sydney, Australia}\\
       \email{dtsa3820@uni.sydney.edu.au}
\alignauthor
Sanjay Chawla\\
       \affaddr{University of Sydney}\\
       \affaddr{City Road}\\
       \affaddr{Sydney, Australia}\\
       \email{sanjay.chawla@uni.sydney.edu.au}
}
\additionalauthors{Additional authors: John Smith (The Th{\o}rv{\"a}ld Group,
email: {\texttt{jsmith@affiliation.org}}) and Julius P.~Kumquat
(The Kumquat Consortium, email: {\texttt{jpkumquat@consortium.net}}).}
\date{30 July 1999}

\maketitle

\begin{abstract}
The {\tt like} regular expression predicate has been part of the SQL standard
since at least 1989. However, despite its popularity and wide usage, database
vendors provide only limited  indexing support for regular expression queries which
almost always require a full table scan. 

In this paper we propose a rigorous and robust approach for providing indexing
support for regular expression queries. Our approach consists of formulating
the indexing problem as a combinatorial optimization problem. We begin
with a database,  abstracted as a collection of strings. From this data set we 
generate a query workload.  The input to the optimization problem is the database
and the workload. The output is a set of multigrams (substrings) 
which can be used as keys to records which satisfy the query workload. 
The multigrams can then
be integrated with the data structure (like B+ trees) to provide indexing
support for the queries. We provide a deterministic and a randomized approximation 
algorithm (with provable guarantees)  to solve the optimization problem. 
Extensive experiments on synthetic data sets demonstrate that our approach is accurate and efficient.

We also present a case study on PROSITE patterns - which are complex regular 
expression signatures for classes of proteins. Again, we are able to demonstrate 
the utility of our indexing approach in terms of accuracy and efficiency. Thus, perhaps for the first time, there is
a robust and practical  indexing mechanism for an important class of database queries. \\[2ex]
\vspace{1cm}

\end{abstract}

\section{Introduction}

Consider a simple database query: 

{\tt  SELECT doc.id FROM doc where doc.text LIKE '\%har\%'}

Current database systems have to carry out a full table scan to answer the above query.
For large databases (like collections of documents) this can be extremely time consuming
rendering the use of regular expression queries almost infeasible.

In the above query, the query poser may be searching for documents which
contain the text {\tt share} or {\tt shard} or something else. To speed up query processing, the database
designer could potentially create an index where the keys are multigrams (substrings) 
which point to all records which contain that multigram - the {\tt posting list} of
the multigram. In this instance, there are several choices that could be made. For example,
 the index algorithm may decide to select the  multigram {\tt har} or {\tt ha} or {\tt ar} or nothing at all. 

To process the query, the query engine  will  determine if there is any of the text fragments
in the regular expression query belong to the key set of the index. If so, then the index will be scanned to 
retrieve the posting list for that multigram. The query engine then will apply
the regular expression template on each record of the posting list and select those that
are matched.  The query processing framework is shown in Figure ~\ref{fig:lpms_qry}.

In Figure \ref{fig:lpms_qry}, the diagram shows the simplified process flow view
of our regular expression query framework. It can be seen that the index is
made up of two components: \textbf{(1) multigrams} - index keys; 
\textbf{(2) posting lists} - database record filter list. 
When user submits a query request, the regular expression will be
firstly submitted to the regular expression rule engine in step one. 
At the same time, the query will be broken down into query keys in step two. 
Then, these query keys will be used to match the multigrams 
as well as the posting list in step three, four and five.
Then, it will generate a candidate database
record set to answer the regular expression query. In step six,
the candidate database records are passed to the regular expression rule
engine, which has already incorporated with the regular expression of interest
in step one, and finally produces the query answer in step seven.

If no index is employed to answer the query request, step two to step five
will be ignored, and the entire database will be passed to the regular
expression engine to process in step six. Generally, this is named
as ``full table scan'' in the database community. 

The key challenge is in deciding what to index. For example, the advantage of indexing {\tt har} over {\tt ha} is efficiency: 
the database records which contain the text {\tt har} is a subset of
those which contain {\tt ha} resulting in lower IO cost for queries which contain {\tt har}. The
disadvantage is that fewer queries are likely to use this key compared to {\tt ha}.
Thus the index designer has to (i) control the size of the index by bounding the maximum
number of multigrams that can be indexed, (ii) balance the trade-off between the likelihood
of a multigram being used by a query and the size of its posting list. 

\begin{enumerate}
\item We present a novel algorithm which selects the multigrams to index. Note that the
number of possible multigrams is infinite. Our approach is to cast the multigram selection
problem as an integer program and then use a linear programming relaxation, followed
by rounding to select the multigrams.
\item Using the above algorithm, we have implemented a fully functional regular
expression query framework on top of a commercial database.
\item We have carried out extensive experiments to test for the accuracy, efficiency
 and robustness of the regular expression querying framework.
\item We present a novel case study using protein data where regular expression patterns
are used routinely to classify families of proteins.
\end{enumerate}

The rest of the paper is as follows. In Section 2 we report on related work. Section 3
formalizes the index selection problem in an integer programming framework and presents 
a small example to illustrate how the algorithm selects the multigrams that are
indexed. Section 4 presents the Linear Programming Multigram Selection (LPMS) algorithm
and the related theory. In Section 5 we report on the extensive set of experiments to
test the query framework. We also present a case study in a real application. We conclude
in Section 6 with a summary and directions for future research. Appendix 1 describes all
the data sets that were created and used for the experiments.

\begin{figure}[t]
    \centering
    \includegraphics[width=0.7\textwidth]{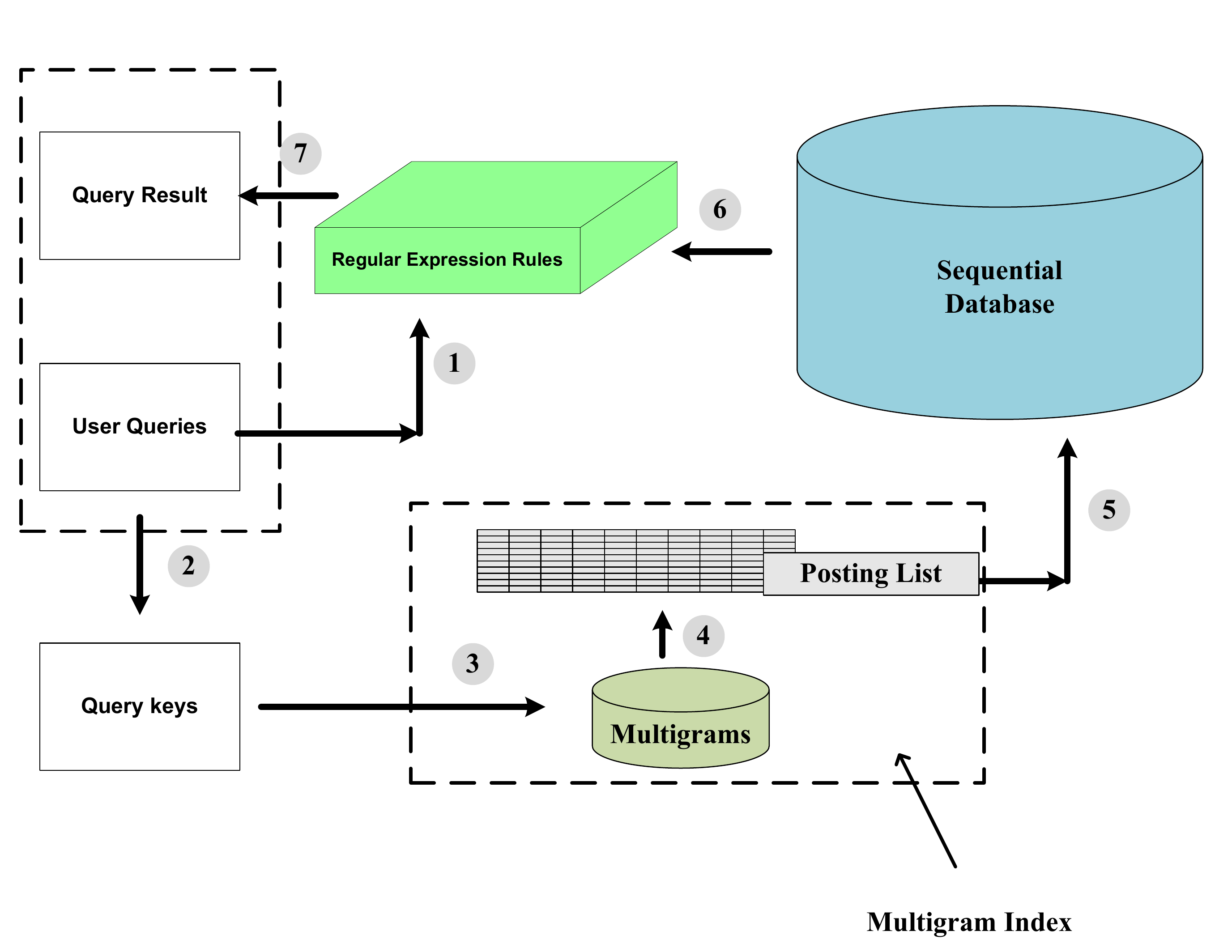}
    \caption{A generic indexed query framework consists of $7$ steps. Whereas, if no indexes
are employed, steps $2$ to $5$ will be skipped}
    \label{fig:lpms_qry}
\end{figure}

\section{Related Work}
Traditionally a regular expression query is processed by constructing 
a non-deterministic finite automata (NFA) which recognizes the language 
defined by the regular expression pattern  \cite{RE-Idx-automaton3,RE-Idx-automaton1,RE-Idx-automaton2}. 
The entire database is processed one character at a time resulting in
$O(mn)$ time and $O(m^2)$ space, where $m$ and $n$ are the size of the regular expression
and database respectively.

The two most commonly indexes which have been currently using in both database systems are B+trees \cite{Idx-Btree2,Idx-Btree1}
and bitmaps \cite{Idx-Btree3}. Both of them are designed to support exact pattern matching rather than
regular expression querying. Therefore, some database vendor tailor these indexes to support regular expression
query, such as: full text indexing in PostgreSQL, and function-based index in Oracle. 
However, their application is criticized as quite restricted.\\

From an indexing perspective,  tries and suffix trees 
\cite{RE-Idx-trie0, RE-Idx-suffixtree1, RE-Idx-suffixtree2, RE-Idx-suffixtree3, RE-Idx-suffixtree4, RE-Idx-suffixtree5}, which are lexicographical  ordered
trees, can be used to support regular expression queries. However, the major weakness
of tries and related structures like suffix-trees is that the size of the index is often
much larger than the size of the database that is being indexed. 

Basically, its structure is categorized as a lexicographical ordered tree. 
The height of a trie, $H(n)$, represents the longest path from the root to a 
leaf node where $n$ is the database size. For a random and uniform 
distributed string database, $H(n)=$ $2log_2(n) + o(log_2(n))$ \cite{RE-Idx-trie0-height}. 
Therefore, we can see that trie is feasible to achieve sub-linear query run time. 
In fact, it is reported that in the worse case, 
the number of nodes (i.e. query space) could be $O(n^2)$ \cite{RE-Idx-trie1}. 
In other words, it indicates that \textbf{trie} improves the query time at the expense of the \textbf{query space}.
This may result in creating the thorny issues such as \textbf{memory trashing} and \textbf{bottleneck}
especially in conditions where the database size is considerably large.\\

In order to reduce the number of node while sustaining trie's performance, 
Weiner \cite{RE-Idx-suffixtree3} transformed Trie to the first suffix tree structure,
that the index is built by processing the database string from right to left which takes $O(nH(n))$ time. 
Later, Ukkonen \cite{RE-Idx-suffixtree1} developed a left-to-right built algorithm
that maintains a suffix tree which takes $O(n)$ time. 
In later research, this suffix tree was later augmented with suffix link \cite{RE-Idx-suffixtree5} to
form a directed acyclic word graph which leads to an algorithm for the construction of the automata.\\

In another stream of research, inverted files \cite{Idx-invertfile} and 
q-grams \cite{Idx-q-grams1,Idx-q-grams2,Idx-q-grams3} are two designs which
support information retrieval and informatics. 
However, both are not suitable for regular expression queries because they rely 
on a predefined list of words.

More recently, a relatively different research strand  has been developed to support
regular expression querying.  The initial work in this area was carried out by Cho and Rajgopalan - the authors who denote their index as FREE multigram index \cite{mgram-free}.


The principle of the FREE model is direct and straightforward. 
The objective is to select the minimal useful multigram set. 
A  multigrams is  considered `\textbf{useful}'  as long as the number of data records containing the multigrams that are smaller than a threshold
(namely selectivity or support).  Otherwise it is considered `\textbf{useless}'.
The FREE algorithm consists of a  series of iterations starting with a single character multigram ($k=1$),  whereas in each iteration, all minimal useful multigrams of length $k$ are then selected.  The remaining are considered `useless'  and form the 
the prefix-seeds of the multigrams in the next iteration.
The process is continually repeated until there is no `useless' multigram is left. 
The algorithm is designed so that the set of useful multigrams selected is  prefix-free.
 
The key insight of the FREE algorithm is that {\it the size of the prefix-free multigram index is always bounded above by the database size.}

This is an important property that all regular expression indexes should
 enforce. Since  FREE selects low support multigrams, queries which use
FREE always take less time than queries which employ a full table scan.
However the  disadvantage of  FREE is that it  does not take any query workload into account and
thus many queries are unlikely to utilize the index. 

To overcome the weakness of FREE,  Hore et. al. \cite{mgram-best}  (in a 2004 CIKM paper) proposed
a multigram selection algorithm called BEST which takes both the database a query workload
into consideration.  In the BEST  algorithm, each multigram is associated with a cost factor $c$ (equivalent to the support) and at the same time associated with the query set by a benefit factor, $b$.  The benefit of the multigram is equal to the number of records that can be pruned
when  utilized by a query. The ratio of $\frac{b}{c}$  forms the \textbf{utility value} which is defined as the objective function to optimize both the index efficiency and query hit rate at the same time.

Hore formalizes the multigram selection problem as an instance of the 
Budgeted Maximum Coverage (BMC) \cite{bmc-1} problem. 
Specifically the cover set forms the `Budgeted' part while the utility
of the index is captured in the `Coverage' component. The main
principle of the BEST algorithm is to select multigrams which  increase
the index hit rate so more queries can utilize the algorithm. However, a major
weakness of the BEST algorithm is that it neither scales to large datasets nor
large query workloads because it uses a cartesian product of query workload and
database as the search space.

Our approach is inspired by both FREE and BEST. Like the BEST algorithm we take both
the database and query workload into consideration but unlike BEST we generate a representative
workload from the database. The advantage of having both a database and workload is that we can formalize the problem in a combinatorial optimization framework.  We also formalize
the problem in a way which ensures that the set of multigrams selected are prefix-free.
Thus we are able to bound the size of the index and guarantee that the running time
of a query using the index will be less than the full database scan.

\section{The index selection problem}
In this section we will use the integer programming framework to formalize the index selection problem. 
We will also provide a small example which explains how the integer programming solution
returns multigrams which can then be indexed. In this section we will assume that the query workload
is given. In practice we generate a representative workload from the database. In Section 4 we
will show how the integer program can be relaxed to return the solutions in an efficient manner.

\subsection{Problem Definition}

To recall we are given a database, abstracted as a collection of strings, and  
a set of regular expression queries $Q$.  Our objective is to select a set of multigrams which will be used as keys of an
index for efficiently answering the queries in $Q$.

The index will be used to retrieve a set of candidate records from the database which
may satisfy the query. After the candidate records are selected, the actual
regular expression matching is done in memory to select the exact set of
records which satisfy the query.

For the purpose of indexing we will treat each query $q$ as
a set of multigrams $M_{q}$. Let $\bar{M}_{q}$ be the set $M_{q}$ and all
the substrings of length at least one  which appear within each multigram in $M_{q}$. 

Let $G = \bigcup_{q \in Q}\bar{M}_{q}$ be the universe of candidate
multigrams. It is from $G$ that we will select a subset $G_{I}$ which will form the index. \\

\begin{example}
Suppose the set $Q$ consists of a single query $q = (ex).\{1,3\}(ess)$. 
Then $M_{q} = \{ex, ess\}$ and  $\bar{M}_{q} = \{ex, es, ss, ess\}$
\end{example}

To formalize the index selection problem as an integer program (IP) we need 
to define (i) integer variables ($x$), (ii) the constraints ($A,b$) and (iii)
the objective function ($c$) and set up the problem as
\begin{equation}
\begin{array}{lll}
\mbox{minimize} & \sum\limits_{g \in G} c_{g}x_{g} & \\[2ex]
\mbox{subject to} & Ax \geq b & \\[2ex]
& x \in \{0,1\}
\end{array}
\end{equation}
For each multigram $g$ in $G$ we associate an integer binary variable $x_{g} \in \{0,1\}$. 
The variable $x_{g}$  will be set to one if the multigram $g$ is selected to be part of the index. 
To formalize the constraints we will construct a $|Q| \times |G|$ matrix A such that

\begin{equation}
A_{i,j} = 
\left\{
\begin{array}{ll}
s(g_j) &  \mbox{if } g_{j} \in \bar{M}_{q_{i}};\\
0      & \mbox{otherwise}.
\end{array}
\label{model_matrix_A}
\right\}
\end{equation}

Here $s(g)$ is the {\it support} or the number of rows in the database
in which the multigram $g$ appears at least once. It is through the
definition of $A$ that the integer program captures the characteristics
of the underlying database. We also need to define the $|Q|$ dimensional
vector $b$ which defines the right hand side of the system $Ax \geq b$.
Here we define

\begin{equation}
\begin{array}{l}
b_{i} = \min\limits_{g \in \bar{M}_{q_{i}}}s(g) \;\; \forall i
\end{array}
\label{model_matrix_b}
\end{equation}

The definition of $b_{i}$ can be interpreted as follows. Each row of
$A$ represents a query. The row constraint captures the smallest number
of database records that must be returned if the query will use the index. This
is captured with multigram $g$ with the smallest
support contained in the {\it expanded} query $\bar{M}_{q}$. 

We now define the objective function which is typically of the
form 

\begin{equation}
\begin{array}{l}
\sum\limits_{g \in G}c_{g}x_{g}
\end{array}
\label{model_constraint}
\end{equation}

The objective should capture the trade-off between the coverage of a multigram
and it is support in a database. By coverage we mean that one multigram
can be used by several queries. On the other hand if the support of this
multigram is high (relative to the size of the database) then the index
is not necessarily useful as a full table scan may only be slightly less
efficient. For example suppose the objective is to select multigrams
to index text documents. Then selecting the word ``the'' as an index
is not very efficient as most documents in the database will be returned
as candidates. In the language of information retrieval the choice of
multigrams need to balance the trade-off between precision versus recall.

Since we want to cast the IP as a minimization problem we want to select
multigrams which have low support and at the same time we want the selected
multigram to be used by as many queries as possible. Thus we define

\begin{equation}
\begin{array}{l}
c_{g} = \frac{s(g)}{|g|\sum\limits_{q \in Q}\imath(g \in \bar{M}_{q})}
\end{array}
\label{model_matrix_c}
\end{equation}

where $\imath(g \in \bar{M}_{q})$ is an indicator function. Thus the numerator
of $c_g$ is the support of the multigram $g$ and we want multigrams with
small support (making the index more efficient) and at the same time we
want to select multigrams with high coverage so  fewer multigrams need
to selected to form the index.

\begin{theorem}The solution vector $x$ returned by the integer program
represents a set of multigrams which form a prefix-free set.
\end{theorem}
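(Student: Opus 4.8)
The plan is to argue by contradiction, exploiting the fact that the covering constraint $Ax \ge b$ is satisfied for a query $q_i$ as soon as a single selected multigram lies in $\bar{M}_{q_i}$: since $b_i = \min_{g \in \bar{M}_{q_i}} s(g)$, any selected $g_j \in \bar{M}_{q_i}$ already contributes $s(g_j) \ge b_i$ to the $i$-th row sum. So feasibility simply means every query is ``covered'' by at least one selected multigram from its expanded set, which is exactly the reformulation already noted after equation (\ref{model_matrix_b}).

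First I would suppose the returned (optimal) $x$ is not prefix-free, so it selects two multigrams $u$ and $v$ with $u$ a proper prefix of $v$ (both $x_u = x_v = 1$). The key structural observation I would establish is a coverage-containment: if $v \in \bar{M}_q$ then $v$ is a substring of some multigram in $M_q$, and since $u$ is a prefix of $v$ it is also a substring of that same multigram, hence $u \in \bar{M}_q$. Thus every query whose expanded set contains $v$ also contains $u$.

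Next I would form $x'$ from $x$ by setting $x'_v = 0$ and leaving every other coordinate unchanged, then check $Ax' \ge b$ row by row. For a query $q_i$ with $v \notin \bar{M}_{q_i}$ the row sum is unchanged. For a query with $v \in \bar{M}_{q_i}$, the containment gives $u \in \bar{M}_{q_i}$ with $x'_u = 1$, so the row sum is at least $s(u) \ge b_i$ (because $b_i$ is the minimum support over $\bar{M}_{q_i}$ and $u$ belongs to it). Hence $x'$ is feasible. Since dropping $v$ lowers the objective by exactly $c_v = \frac{s(v)}{|v|\sum_{q}\imath(v \in \bar{M}_q)} > 0$, we obtain an objective strictly below that of $x$, contradicting the optimality of $x$; therefore no such pair $u,v$ exists and the selection is prefix-free.

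The main obstacle I anticipate is the degenerate case where $v$ has zero support (a query substring that never occurs in the database), since then $c_v = 0$ and removing it does not \emph{strictly} decrease the cost. I would handle this either by restricting the candidate universe $G$ to multigrams of positive support---which is natural, as a zero-support key indexes nothing---or by weakening the conclusion to ``there exists an optimal solution that is prefix-free.'' A secondary point worth stating explicitly is that the whole argument deletes the \emph{longer} multigram $v$ rather than the prefix $u$: deleting $u$ can break feasibility for queries containing $u$ but not $v$, so the direction of the exchange matters, and the coverage-containment is precisely what makes deleting $v$ safe. (Note that the anti-monotonicity fact $s(u)\ge s(v)$ is not actually needed here, only $s(u)\ge b_i$, which is automatic.)
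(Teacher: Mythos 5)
Your proof is correct, but it is not the paper's proof: you run the exchange in the opposite direction, and that direction is precisely what makes the argument sound. The paper, after deriving $c_{u} \geq c_{v}$ from anti-monotonicity of support and $|u| \leq |v|$, constructs $x'$ by deleting the \emph{prefix} $u$ (so the objective drops by the larger amount $c_{u}$) and then simply asserts that $Ax' \geq b$ still holds, with no justification---and none is available in general, because the coverage containment runs the wrong way for that deletion: a query $q$ may have $u \in \bar{M}_{q}$ but $v \notin \bar{M}_{q}$, and if $u$ was the only selected multigram in $\bar{M}_{q}$, zeroing $x_{u}$ zeroes that row sum and destroys feasibility. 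You instead delete the \emph{extension} $v$, and your coverage-containment lemma ($v \in \bar{M}_{q} \Rightarrow u \in \bar{M}_{q}$, since a prefix of a substring is a substring) combined with the observation that a single selected member of $\bar{M}_{q_{i}}$ already meets the $i$-th constraint (its entry $s(u)$ is at least the minimum $b_{i}$) yields a genuine row-by-row feasibility check---the step the paper's version is missing. The price is the smaller decrease $c_{v}$ rather than $c_{u}$, but any strictly positive decrease suffices for the contradiction, and you are right that the anti-monotonicity fact $s(u) \geq s(v)$, item 1 in the paper's proof, becomes unnecessary. Your degenerate case is also a real one the paper glosses over: if $s(v) = 0$ then $c_{v} = 0$, the exchange is cost-neutral, and the theorem as literally stated can fail because an optimal solution may harmlessly include a zero-support extension; either of your remedies (excluding zero-support multigrams from $G$, or weakening the claim to existence of a prefix-free optimum) is appropriate, and the exclusion is consistent with the paper's later reliance on the smallest \emph{non-zero} support $s_{\min}$ in its rounding theorem.
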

\begin{proof}
The proof will be by contradiction.
Suppose $x$ does not return a prefix-free set. Let $x_{u}$ and $x_{v}$
represent multigrams $u$ and $v$ such that $u$ is a prefix of $v$ and
therefore
\begin{enumerate}
\item $s(u) \geq s(v)$.
\item If $v \in \bar{M}_{q}$ then $u \in \bar{M}_{q}$ and
\item $\frac{1}{|u|} \geq \frac{1}{|v|}$
\end{enumerate}
Thus $c_{u} \geq c_{v}$. Now if construct a new vector $x'$ such
\[
x'(g) = \left\{\begin{array}{ll}
x(g) & \mbox{ if} g \neq u \\
0 & \mbox{ otherwise }
\end{array}
\right.
\]
Then $\sum_{g \in G}c_{g}x'_{g} <  \sum_{g \in G}c_{g}x_{g}$. Furthermore by
construction of the constraint matrix $Ax' \geq b$. This violates the
minimality of $x$. Thus the solution returned by the integer program
must be prefix-free.
\end{proof}

The importance of having a prefix-free index set has been noted (and proved)  before
by Cho and Rajgoplan \cite{mgram-free}. Essentially the prefix-freeness guarantees that
the size of the index (as measured by the number of pointers into the
database) is bounded by the size of the database. More formally (and
in our notation)
\begin{theorem} If $x$ is the vector returned by the integer program
and $|R|$ is the size of the underlying database then
\[\sum_{\{g:x_{g}=1\}}s(g) \leq |R|
\]
\end{theorem}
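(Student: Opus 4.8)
The plan is to bound the left-hand side by re-expressing the sum of supports as a sum over records, and then to exploit prefix-freeness (Theorem 1) inside each individual record. First I would let $S = \{g : x_{g}=1\}$ denote the selected multigram set; by Theorem 1 it is prefix-free. The quantity $\sum_{g \in S} s(g)$ is dangerous to bound term by term, because a single record may lie in the posting lists of many selected multigrams, so any naive estimate overcounts. To make the overcounting visible I would swap the order of summation:
\[
\sum_{g \in S} s(g) \;=\; \sum_{g \in S} |\{\, r : g \mbox{ occurs in } r \,\}| \;=\; \sum_{r} |\{\, g \in S : g \mbox{ occurs in } r \,\}|,
\]
so that it suffices to bound, for each record $r$, the number of distinct selected multigrams that occur in $r$.

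The crux, and the step I expect to be the real obstacle, is this per-record bound, and it is exactly where prefix-freeness earns its keep. I would argue that two distinct multigrams of $S$ cannot begin at the same character position of $r$: if $g_{1}$ and $g_{2}$ both start at position $i$, then each equals a prefix of the suffix of $r$ beginning at position $i$, so the shorter is a prefix of the longer, contradicting prefix-freeness unless $g_{1} = g_{2}$. Hence the map sending each selected multigram occurring in $r$ to one of its starting positions is injective, and the number of such multigrams is at most the number of positions in $r$, namely its length $|r|$.

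Summing this per-record bound over all records then yields
\[
\sum_{g \in S} s(g) \;\le\; \sum_{r} |r| \;=\; |R|,
\]
where $|R|$ is read as the total length (aggregate character count) of the database. The only subtlety I would flag explicitly is this reading of $|R|$: under the alternative record-count reading the inequality can fail (for instance, one record \texttt{abcd} together with the prefix-free set $\{\texttt{ab},\texttt{cd}\}$ gives sum of supports $2$ against a single record), so the statement must be understood with $|R|$ as the aggregate size of the strings, which is also the sense in which Cho and Rajagopalan bound the FREE index. The difficulty here is conceptual rather than computational: the seemingly loose term-by-term sum is rescued entirely by the injection of distinct prefix-free multigrams into distinct starting positions within each record.
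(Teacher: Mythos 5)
Your proof is correct, but note that the paper itself never proves this theorem: it states the bound and defers entirely to Cho and Rajagopalan \cite{mgram-free}, remarking only that prefix-freeness ``guarantees that the size of the index (as measured by the number of pointers into the database) is bounded by the size of the database.'' What you have written is, in effect, the self-contained argument behind that citation, and it is sound at every step: the double-counting identity
\[
\sum_{g \in S} s(g) \;=\; \sum_{r} \bigl|\{\, g \in S : g \mbox{ occurs in } r \,\}\bigr|
\]
is valid because $s(g)$ counts rows, not occurrences, so each pair $(g,r)$ contributes once; and the per-record injection is exactly where prefix-freeness (the paper's Theorem~1) is needed, since two distinct selected multigrams starting at the same position of $r$ would each be a prefix of the same suffix of $r$, forcing one to be a prefix of the other. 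Your explicit disambiguation of $|R|$ is also a genuine contribution rather than pedantry: the paper says only ``the size of the underlying database,'' and your counterexample (record \texttt{abcd} with prefix-free set $\{\texttt{ab},\texttt{cd}\}$) correctly shows the record-count reading is false, so the theorem must be read with $|R|$ as aggregate character length --- which is indeed the sense in which the FREE paper's bound holds, each database position serving as the start of at most one indexed multigram. In short, where the paper offers an appeal to authority, you supply the missing proof; the only thing the citation buys the paper is brevity, while your argument makes the statement verifiable and pins down the one reading of $|R|$ under which it is true.
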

The importance of Theorem 1 and 2 cannot be underestimated. For the first
 time we have a principled way of selecting a set of multigrams to index.
The definition of the problem gives us prefix-freeness. As a result
the constraint on the size of the index is now {\it endogenous} to
the problem as opposed to providing an exogenous (budget) constraint as in
 the BEST method \cite{mgram-best}.

\subsection{An example of index selection with integer programming}
In this  section we will walk through a simple example
to illustrate how the integer programming approach can
be used for multigram selection.

Table \ref{word_db} shows a simplified word database. Table \ref{qry_examples} shows two typical regular expression
queries. For instance, the first query  looks for the words in the word database that
contains the substrings, `ex' or `pr', followed by any substring of length
between 1 and  3 and then immediately followed by the substrings, `eed' or `ess'.
Since the regular expression is composed of an `or' predicate, all the \textbf{or-}ing 
elements within the regular expression must be uniquely identified to avoid ambiguity.
In column two of Table \ref{qry_examples}, we show that the first regular expression query
is broken down into four non-\textbf{or} regular expression queries.

The word database consists of 62 multigrams with length ranging from two to four\footnote{We have restricted multigrams to be between length two and four for illustration. In practice there is no restriction.}
Only 14 multigrams are relevant to the two regular expression queries.
In table \ref{useful_gram}, we lists out all the multigrams
in the word database that relevant for $Q$. 
From equation \ref{model_matrix_A}, \ref{model_matrix_b} \& \ref{model_matrix_c},
we can calculate matrix $A$ and vector $c$ and  $b$.

\begin{table}[t]
\caption{\scriptsize{The word database $R$ contains 8 records made up of the multigram set $\bar{M}_{2,q}$ 
as elaborated in table \ref{useful_gram} }
\label{word_db}}{
\begin{tabular}{|r|r|r|r|r|r|} \hline
$r_1$ & `succeed' & $r_4$ & `recede'   & $r_7$ & `succession'  \\ \hline
$r_2$ & `proceed' & $r_5$ & `secession & $r_8$ & `excess'      \\ \hline
$r_3$ & `precede' & $r_6$ & `exceed'   &       &               \\ \hline
\end{tabular}
}
\end{table}

\begin{table}[t]
\caption{\scriptsize{Query set $Q$ contains 2 regular expression queries
with 14 (out of 64) multigrams}
\label{qry_examples}}{
\begin{tabular}{|r|l|} \hline
           Regular Expression&                 $M_{q}$ \\ \hline
(ex)$|$(pr).\{1,3\}(eed)$|$(ess) & $q_1$:(ex).\{1,3\}(ess)  \\ 
                                 & $q_2$:(ex).\{1,3\}(eed)  \\ 
                                 & $q_3$:(pr).\{1,3\}(ess)  \\ 
                                 & $q_4$:(ex).\{1,3\}(eed)  \\ \hline
(pr)$|$(re).\{1,2\}(cede)        & $q_5$:(pr).\{1,2\}(cede) \\ \hline
\end{tabular}
}
\end{table}

\begin{table}[t]
\caption{\scriptsize{The multigram set $\bar{M}_{2,q}$ for $R$ \& $Q$ contains 64 multigrams
with length ranging from 2 to 4}
\label{useful_gram}}{
\begin{tabular}{|r|r||r|r||r|r|} \hline
      & $g_j$,$s(g_j)$ &        & $g_j$,$s(g_j)$&         & $g_j$,$s(g_j)$\\ \hline \hline
$g_1$ &         `ex', 2&   $g_6$&        `ed', 5& $g_{11}$&      `ced',2  \\ \hline
$g_2$ &         `es', 3&   $g_7$&       `eed', 3& $g_{12}$&      `ede',2  \\ \hline
$g_3$ &         `ss', 3&   $g_8$&        `pr', 2& $g_{13}$&     `cede',2  \\ \hline
$g_4$ &        `ess', 3&   $g_9$&        `ce', 8& $g_{14}$&       `re',2  \\ \hline
$g_5$ &         `ee', 3&$g_{10}$&        `de', 2& $g_{15}$&               \\ \hline
\end{tabular}
}
\end{table}

Therefore, $\large A$ =
\[ \left( \begin{array}{cccccccccccccc}
2&3&3&3&0&0&0&0&0&0&0&0&0&0 \\
2&0&0&0&3&5&3&0&0&0&0&0&0&0 \\ 
0&3&3&3&0&0&0&2&0&0&0&0&0&0 \\
0&0&0&0&3&5&3&2&0&0&0&0&0&0 \\
0&0&0&0&0&5&0&2&8&2&2&2&2&0 \\
0&0&0&0&0&5&0&0&8&2&2&2&2&2 \\
\end{array} \right)\]

, $\large c$ = (1 1.5 1.5 1 1.5 2.5 1 1 4 1 0.6 0.6 0.5 1)
and $\large b$ = (2 2 2 2 2)'.

When the problem instance $(A,b,c)$ are fed into an integer program solver, the 
returned solution vector  $\large x$ = (1 0 0 0 0 0 0 1 0 0 0 0 1 0)'
or $g_1$, $g_8$ \& $g_{13}$. As a result, `ex', `pr', `cede' 
are the optimal set of multigram selected.  

\section{A Linear Programming Relaxation}
In the previous section we have presented a model for the multigram
selection problem based on integer programming. We proved that the
the resulting multigrams form a prefix-free set. 

However theoretically the general integer program problem is NP-Hard and
even in practice algorithms for IP programs require exponential time for 
most instances. This is because unlike linear programming problems where 
convexity of the problem can be exploited (local optima are global optima), for integer programming the full lattice of feasible points has to be explored ~\cite{vazirani}.

A common approach to get around the complexity of integer programs is
relax them to linear programs which can then be solved efficiently (in polynomial time). The solution of the linear program provides
a natural lower bound (for minimization problems) for the integer program.
The linear program {\it fractional} solution then needs to be converted to an integer solution. There are two approaches for carrying out the conversion. The first approach is to select a deterministic threshold and use that
to round the solutions. The second approach is to interpret the solution
vector (typically between 0 and 1) as probabilities and carry out a
randomized rounding.

For the rest of the paper, the solution of multigram selection problem
obtained using the integer program will be referred as IPMS (Integer Program Multigram Selection). Similarly those obtained from deterministic and randomized rounding of the linear program will will form the basis of the  LPMS-D and LPMS-R algorithms respectively.

\subsection{The LP Relaxation}
The LP relaxation of the  integer program is 

\begin{equation}
\begin{array}{lll}
\mbox{minimize} & \sum\limits_{g \in G} c_{g}x_{g} & \\[2ex]
\mbox{subject to} & Ax \geq b & \\[2ex]
&  0 \leq x \leq 1
\end{array}
\end{equation}

Clearly if the optimal solution of the integer program is denoted as $OPT_{I}$
and the solution of the linear program is denoted as $OPT_{LP}$ then $OPT_{LP} \leq OPT_{I}$. 

Now the solution vector returned by the linear program is fractional and so we need to convert the fractional solution into an integer solution. 

\begin{theorem} Let $s_{\max}$ and $s_{\min}$ be the multigram in $G$ with the largest and smallest non-zero support respectively. Furthermore let $m^{\ast} =\max_{q \in Q}|\bar{M_{q}}|$. If $x_{l}$ is the solution of the relaxed
linear program and we convert it to an integer solution by rounding up
all elements in $x_{l}$ which are greater than $\frac{s_{\min}}{s_{\max}m^{\ast}}$ then the rounded solution is a feasible solution of the integer program
and achieves a constant approximation to the original integer program
\end{theorem}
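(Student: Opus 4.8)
The plan is to verify two properties of the rounded vector $x_{l}^{D}$ and then combine them. First, I would show that $x_{l}^{D}$ is \emph{feasible} for the integer program, i.e.\ that $A x_{l}^{D} \geq b$; since $x_{l}^{D} \in \{0,1\}^{|G|}$ by construction, feasibility alone certifies it as a legitimate IP point. Second, I would bound its objective value against $OPT_{LP}$. Chaining the second bound with the elementary inequality $OPT_{LP} \leq OPT_{I}$, already established, then delivers the claimed approximation factor $\frac{s_{\max} m^{\ast}}{s_{\min}}$.

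For feasibility I would treat the constraints $A x \geq b$ one row at a time. The crucial structural fact is that, by the definition of $A$, row $i$ is supported exactly on the columns $j$ with $g_{j} \in \bar{M}_{q_{i}}$; hence it has at most $m^{\ast}$ nonzero entries, each of size at most $s_{\max}$. Writing $t = \frac{s_{\min}}{s_{\max} m^{\ast}}$ for the threshold, I would argue the contrapositive: if every coordinate $(x_{l})_{j}$ with $g_{j} \in \bar{M}_{q_{i}}$ lay below $t$, then the $i$-th row sum would be at most $s_{\max}\cdot m^{\ast}\cdot t = s_{\min} \leq b_{i}$, contradicting the LP-feasibility of $x_{l}$. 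Consequently, for every query $q_{i}$ at least one multigram $g_{j} \in \bar{M}_{q_{i}}$ exceeds the threshold and is rounded up to $1$. Because $b_{i} = \min_{g \in \bar{M}_{q_{i}}} s(g)$, that single rounded multigram already contributes $A_{i,j} = s(g_{j}) \geq b_{i}$ to the $i$-th row sum, so constraint $i$ is satisfied; as $i$ was arbitrary, $A x_{l}^{D} \geq b$.

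For the approximation factor I would track the objective coordinate by coordinate. Coordinates sent to $0$ only lower the cost and can be discarded. For a coordinate rounded up we have $(x_{l})_{j} > t$, so raising it to $1$ inflates its contribution $c_{g_{j}} (x_{l})_{j}$ by the factor $1/(x_{l})_{j} < 1/t = \frac{s_{\max} m^{\ast}}{s_{\min}}$. Summing over the selected coordinates and dropping the nonnegative discarded terms gives $\sum_{g} c_{g} (x_{l}^{D})_{g} \leq \frac{s_{\max} m^{\ast}}{s_{\min}} \sum_{g} c_{g} (x_{l})_{g} = \frac{s_{\max} m^{\ast}}{s_{\min}} OPT_{LP} \leq \frac{s_{\max} m^{\ast}}{s_{\min}} OPT_{I}$, where nonnegativity of every $c_{g}$ and every $(x_{l})_{g}$ justifies discarding terms. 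Thus $x_{l}^{D}$ is feasible and within the factor $\frac{s_{\max} m^{\ast}}{s_{\min}}$ of the optimum, as required.

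The step I expect to be delicate is the strict-versus-non-strict inequality at the threshold in the feasibility argument. The row-sum estimate only yields $\sum_{j} A_{i,j} (x_{l})_{j} \leq s_{\min}$, and since merely $s_{\min} \leq b_{i}$ is guaranteed I need a strict inequality to force the contradiction. The clean remedy is to round up whenever $(x_{l})_{j} \geq t$, a non-strict rule, so that the negation of ``some coordinate is rounded up'' becomes ``all coordinates are strictly below $t$'', producing the strict bound $\sum_{j} A_{i,j} (x_{l})_{j} < s_{\min} \leq b_{i}$ and an unambiguous contradiction; the same threshold still governs the cost estimate. I would also flag that ``constant'' here means independent of the number of variables and queries for a fixed regime of the instance parameters (bounded support ratio $s_{\max}/s_{\min}$ and bounded $m^{\ast}$), since in full generality the factor $\frac{s_{\max} m^{\ast}}{s_{\min}}$ does depend on these quantities.
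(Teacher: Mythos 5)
Your proof follows essentially the same route as the paper's: a per-row contrapositive argument showing that if every coordinate supported by row $i$ lay below the threshold $t = \frac{s_{\min}}{s_{\max}m^{\ast}}$ then the row sum could not reach $b_{i}$, combined with the observation that rounding up inflates each selected coordinate's cost by at most $1/t$, giving $OPT_{LP} \leq OPT_{IP} \leq \frac{s_{\max}m^{\ast}}{s_{\min}}OPT_{LP}$. Your two refinements---completing feasibility by noting that a single rounded multigram $g_{j} \in \bar{M}_{q_{i}}$ already contributes $s(g_{j}) \geq b_{i}$, and switching to the non-strict rule $x_{l} \geq t$ so the contradiction is airtight---are both sound and in fact repair a slippage in the paper, whose theorem statement rounds on the strict inequality while its proof body negates ``less than,'' leaving the boundary case $x_{l} = t$ unhandled.
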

\begin{proof}
All we have to show is that if all elements of $x_{l}$ are less than
$\frac{s_{\min}}{s_{\max}m^{\ast}}$ then $x_{l}$ is not a feasible
solution of the linear program.

Now this is true because for each row $i$ of $Ax \geq b$ there are only
$m^{\ast}$ non-zero entries in each row sum and therefore
\begin{eqnarray*}
\sum\limits_{j}a_{ij}x_{j} & < s_{\max}m^{\ast}\frac{s_{\min}}{s_{\max}m^{\ast}} = s_{\min} \\
\end{eqnarray*}
This contradicts the feasibility of $x_{l}$ (as $Ax \geq b$). 
Thus by rounding all
terms based on the threshold we not only get a feasible solution but
the value of the cost function goes up by at most $\frac{s_{\max}m^{\ast}}{s_{\min}}$ \\
Furthermore the approximation is obtained by noting that

\begin{equation}
OPT_{LP} \leq OPT_{IP} \leq \frac{s_{\max}m^{\ast}}{s_{\min}}OPT_{LP}
\label{determinstic_rounding}
\end{equation}
\end{proof}

Therefore,\\

\begin{equation}
x_{l}^{D} =  
\left\{
\begin{array}{ll}
1 &  \mbox{if } x_{l} > s_{\max}m^{\ast}\frac{s_{\min}}{s_{\max}m^{\ast}}\\[2ex]
0 &  \mbox{otherwise}
\end{array}
\label{deterministic_rounding_formula}
\right\}
\end{equation}

The above theorem will be used as a basis for designing a deterministic
rounding algorithm LPMS-D.

\subsection{LP Relaxation - Randomized Rounding}
While the deterministic rounding algorithm gives an approximate and
feasible solution, the approximation bound is quite loose. A more
practical solution is to use randomized rounding by interpreting the
values of the linear program solution as probabilities. Thus each
component of $x_{l}$ is treated like a probability. The cost function
can be treated as a random variable and it is easy to show that the
expected value of the cost function is equal to value of the linear
program. 

\subsection{Prefix free enforcement}
While LP relaxation and rounding  allows us to overcome the exponential runtime complexity of integer
programs,  it does not  guarantee that the resulting multigram set is 
a prefix-free. The resulting size of the index may not be bounded by the database size. In order to
overcome this shortcoming, we enforce the prefix-free constraint
by proposing an iterative algorithm- LPMS.  

Algorithm 1 outlines the LPMS multigram prefix-free selection
process. The expandSet starts with an element ``.''.
Then  each element in expandSet is  expanded by appending
each character from the alphabet set, $\Sigma$, which will
form the childrenSet. Then all query keys contained
in the query set, $Q$, are consolidated to generate the
$\bar{M}_{q}$ set. After this any multigrams in the
childrenSet that are not present in the $\bar{M}_{q}$ set
will be removed. Then the refined childrenSet and set $Q$
will form the input to  an LP solver to produce the linear
programming solution, $x$. Finally, $x$ is relaxed by
applying either deterministic or randomized rounding. Those 
multigrams with its associated x value equals to ``1'' will be
selected as the multigram and saved in a set $G$. All
unselected multigrams will then replace the current
expandSet. The process will be repeated 
until the expandSet is empty, i.e., all the multigrams are selected.

\begin{algorithm}[t]
\SetAlgoNoLine
\KwIn{string database (R), query set (Q), alphabet set ($\Sigma$)}
\KwOut{Multigram index (G)}
expandSet = \{.\}\;
\Repeat{(expandSet = \{\})}
{
    For each multigram in the expandSet, append a member in $\Sigma$ and save to the childrenSet\;
    Remove all multigram g $\in$ childrenSet if g $\notin$  $\bar{M}_{k,q}$ $\forall$ $k$\;
    Populate matrix A, b, c of the LP model using formulae (\ref{model_matrix_A}), (\ref{model_matrix_b}) 
         and (\ref{model_matrix_c}) respectively\;
    Apply LPMS-D or LPMS-R to find the vector x\;
    Move all multigrams whose associated value in the vector x $=1$ to G\;
    Those multigrams remaining will become the expandSet\; 
}
\caption{LPMS multigram selection algorithm}
\label{alg:one}
\end{algorithm}

\section{Experiments Roadmap}

We have carried out extensive experiments to test our approach for 
accuracy, efficiency  and robustness. The experiment roadmap is shown in Table ~\ref{tab:expt_roadmap}.  
There are total of five experiments which includes one case
study on a real protein  data set. For each of the experiment we generate a different synthetic
data set to vary conditions appropriate for that particular experiment. The details of
data generation are given in Appendix 1.

\subsection{Experiment 1 - Accuracy and Performance}
In this experiment we measure accuracy and performance of our proposed LPMS-D and LPMS-R and compare it with the FREE approach. We use recall as the measure of accuracy which captures the percentage of queries which use the index. For performance we measure precision which
effectively measure  what percentage of candidate records are actually satisfied by a 
query in the workload. We created several data sets by varying the
standard deviation of the support distribution of the multigrams. The details of data  generation are in Appendix 1.

The results are summarized in Figure ~\ref{fig:expt1_recall_precision} and \ref{fig:expt1_qry_cnt_precision}. In 
Figure \ref{fig:expt1_recall_precision} we measure the precision and recall of the LPMS-R, LPMS-D and FREE as a function of the different data sets. As expected (since we proved it) we get a recall of one for LPMS-D. However note that the recall of LPMS-R is also very high -greater than 0.98 in all cases. Thus even after randomized rounding our approach is still able to satisfy most of the constraints of the integer program. Not surprisingly the recall of the FREE approach is very low. This is because  the FREE method uses a set of prefix-free multigrams of low support as the 
basis of the index. The FREE approach does not use the query workload for designing
the index.

When it comes to precision, the story is the opposite. The precision of FREE is greater than
that of LPMS-R which is greater than LPMS-D. Again, FREE indexes multigrams with
small support so when a multigram is selected by a query it is bound to have selected
only a few candidate records making the precision high. 

A more interesting result is to compare the precision of LPMS-D and LPMS-R which 
is shown in Figure \ref{fig:expt1_qry_cnt_precision}. Here the X-axis represents the query-id's
ranked  by their precision on LPMS-R. Thus the query with the highest precision using
LPMS-R is on the left. For each of the query we also calculated the precision of LPMS-D. 
The results clearly show that for almost all instances, the precision of LPMS-R is
higher than that of LPMS-D. 

The results of Figure \ref{fig:expt1_recall_precision} allow us to conclude that LPMS-R is highly accurate (with recall almost equal to that of LPMS-D) and also very efficient (as precision is
a measure of efficieny). 
\begin{table*}[t]
\caption{Road map of the five experiments performed to evaluate the LPMS multigram index}
\label{tab:expt_roadmap}{
\small
\begin{tabular}{|l|l|l|l|l|l|} \hline
Expt&                   Objective and Description&Data Type&   Data Set Description&    Metric& Results \\ \hline \hline
1   & Compare accuracy and performance of LPMS-R,&Synthetic& Data sets with varying& Precision&Fig \ref{fig:expt1_recall_precision} and \ref{fig:expt1_qry_cnt_precision}\\ 
    &  LPMS-D, RDB and and FREE \cite{mgram-free}&         &support distribution of&       and& \\
    &                                            &         &             multigrams&    Recall& \\ \hline
2   &            Scalability of LPMS-R and LPMS-D&Synthetic&  Data sets with varing&Build time&Figure \ref{fig:expt2_linear_db_consturction_time} and  \ref{fig:expt2_linear_qry_consturction_time}\\
    &                                            &         &data and query workload& and Query&  \\
    &                                            &         &                   size&      time&  \\ \hline
3   &      Quality of LPMS-R relaxation vis-a-vis&Synthetic&   Same as Experiment 2&   Size of&Table \ref{expt3_qry_result}\\
    &                 optimal Integer Program and&         &                       &     index&             \\
    &                      BEST \cite{mgram-best}&         &                       &     (posting list) &             \\ \hline
4   &Robustness of index as query workload change&Synthetic& Data sets with varying& Precision&Table \ref{expt4_2pct_qry_testes} \\
    &                                            &         &          alphabet size&       and&             \\
    &                                            &         &                       &    Recall&             \\ \hline
5  & Case Study on Prosite protein patterns using&     Real&Protein database\cite{pfam_site}& Precision&Figure \ref{fig:full_posix_app} and \\
   &                                       LPMS-D&         &            and Prosite pattern &       and&                                    \\
   &                                             &         & query\cite{prosite-1}          &    Recall&             \\ \hline
\end{tabular}}
\end{table*}

\begin{figure*}[t]
    \centering
    \subfigure[]{
\includegraphics[width=0.40\textwidth]{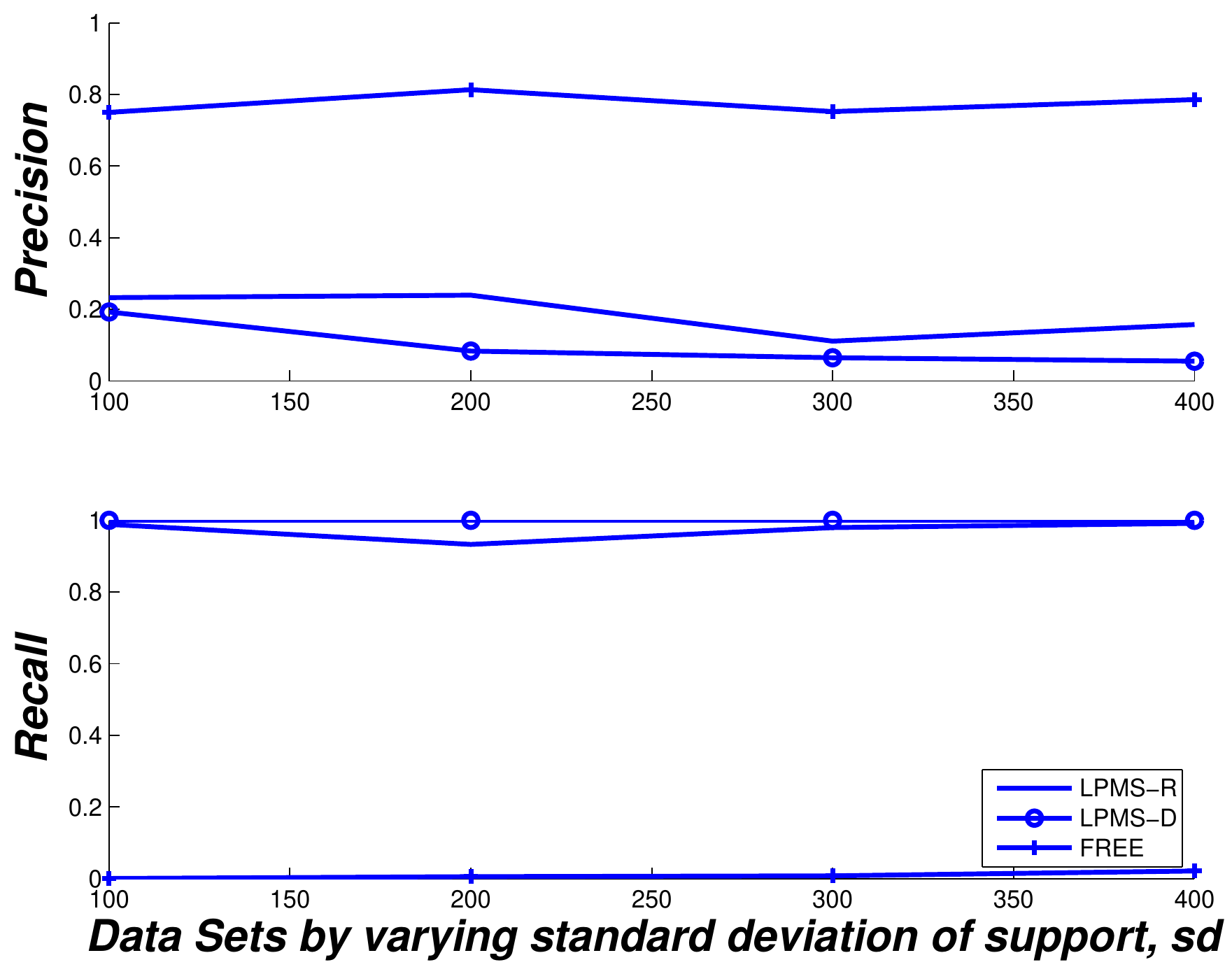}
    \label{fig:expt1_recall_precision}
}
\subfigure[]{
    \centering
    \includegraphics[width=0.40\textwidth]{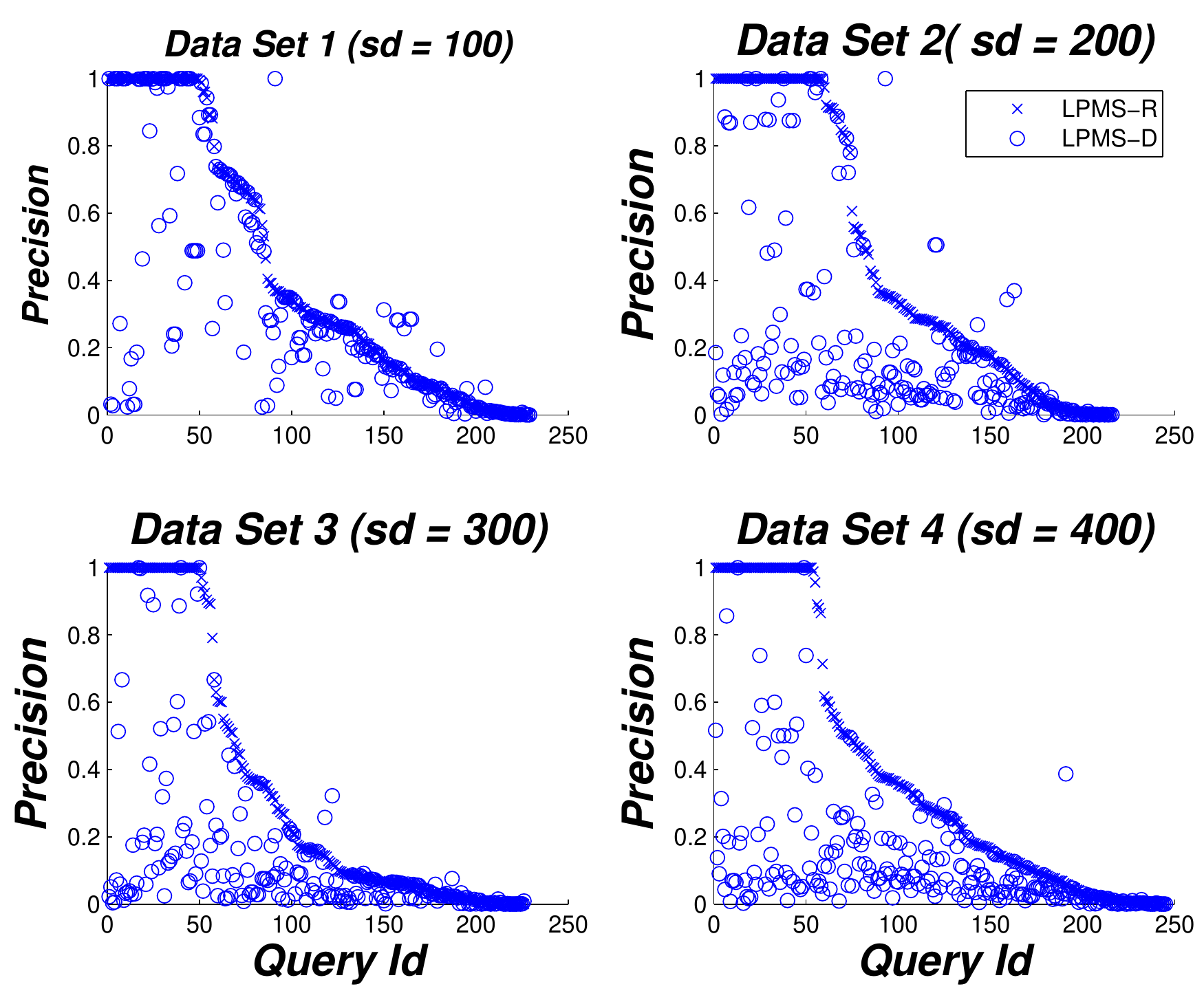}
    \label{fig:expt1_qry_cnt_precision}
}
\caption{Experiment 1: (a) The recall of LPMS-D > LPMS-R $>>$ FREE;  the precision of FREE > LPMS-R > LPMS-D.
(b) For almost all queries the precision of LPMS-R is higher than LPMS-D. This suggests that LPMS-R 
is both highly accurate and yet more efficient than LPMS-D. Note that the recall of FREE is very low making
it ineligible as a serious candidate for indexing.}
\end{figure*}

\subsection{Experiment 2 - Scalability}
In this experiment we measure the  execution time of the componets of Algorithm 1 (LPMS-R).
Algorithm 1 has three distinct components: (i) multigram generation (ii) model construction
and (iii) model solver - which invokes a call to a linear programming engine. 

The multigram generation time (MGT) is the time required to generate all the candidate
multigrams and also calculate their support value. The model construction time (MCT) is 
the time required to populate the matrix $A$, the constraint vector $b$ and 
the cost function $c$.

The execution time is measured  as a  function of varying data size and query workload. In the first set of experiments, the data set size are varied between 20K and 100K while keeping the query workload size
to 1000.   In the second set, the data set size is fixed but the query workload varies between 90K and 500K. Details data set and query workload generation is described in Appendix 1.

FIgure \ref{fig:expt2_linear_db_consturction_time} shows the running time for MGT, the MCT and  overall index construction processing
as a function of increasing database size.  The MGT scales linearly with database size, 
the MCT remains nearly constant and the overall time also scales linearly.   The MCT remains constant as the size of the matrix $A$ and $b$ is dependent on the size of the query workload which is kept constant.

Figure \ref{fig:expt2_linear_qry_consturction_time}, shows the running time as a function of increasing query workload. As expected the MGT is constant (as the database size is kept constant). The MCT scales linearly with query workload size and the overall time increases in a super-linear fashion. This is not surprising because the overall time includes the time to invoke and execute the linear programming
engine. Further refinements, like using a primal-dual approach instead of invoking a linear solver (like simplex) may help reduce the overall time.

However, we can conclude that the current approach scales nearly linearly with both database size and query workload size thus making the proposed approach feasible and practical.

\subsection{Experiment 3 - Optimality}
In Experiment 3 we measure the  divergence from ``optimality'' of LPMS-R and LPMS-D.
The Integer Programming Multigram Selection (IPMS) gives the optimal solution but is only
feasible and practical for small data sets. Thus on small data sets we can experimentally 
compare how far the solution returned from LPMS is from IPMS.  We also compare our approach against the BEST
 algorithm explained earlier \cite{mgram-best}. The BEST algorithm associates a benefit value with each
multigram. We sort the multigrams by benefit value and select the top 100,110,120,130,140,150, 200 and 250
multigrams. The top 250 multigrams result in a hit rate of 100\%, i.e., all the queries in the workload
use the index. Again, the data set which consists of 2000 records and 100 queries is described in
Appendix 1.

The optimality results comparing LPMS-R, IPMS and BEST are shown in Table ~\ref{expt3_qry_result}. 
The IPMS selects multigrams whose posting size is remarkably small - just 106! LPMS-R selects multigrams
with a posting size of 2773. This implies an average precision of 0.304. The posting list of the BEST
algorithm start at 6,774 and by the time the top 250 multigrams are selected, the size of the
posting list has increased to 17,107. Clearly LPMS-R is far superior to the BEST approach.

\begin{table}[h]
\begin{tabular}{|r|r|r|r|r|} \hline 
 Index&\# correct&      Precision& Posting&Prefix\\ 
  type&    query&       Mean/Std&    size&Free  \\ \hline \hline
  IPMS&      100&            1/0&     106&Y     \\ \hline
LPMS-R&       99&    0.304/0.435&   2,773&Y     \\ \hline
 B-100&       91&     0.097/0.28&   6,744&N     \\ \hline
 B-110&       94&     0.067/0.23&   7,396&N     \\ \hline
 B-120&       95&     0.056/0.21&   8,110&N     \\ \hline
 B-130&       97&     0.036/0.17&   8,814&N     \\ \hline
 B-140&       98&     0.025/0.13&   9,513&N     \\ \hline
 B-150&       99&    0.016/0.099&  10,156&N     \\ \hline
 B-200&       99&    0.014/0.099&  13,578&N     \\ \hline
 B-250&      100&    0.004/0.002&  17,107&N     \\ \hline
\end{tabular}
\caption{Experiment 3: On small data sets we can exactly solve for the  integer programming
solution. The optimal posting list size is 106 and LPMS-R returns a list of size of 2,773. The
size of the posting list of BEST, for similar levels of accuracy, is greater than 10,000. Posting
list is a measure of index efficiency.}
\label{expt3_qry_result}
\end{table}

\begin{figure*}[t]
    \centering
    \subfigure[The overall build time for LPMS-R scales linearly with database size]{
    \includegraphics[width=0.4\textwidth]{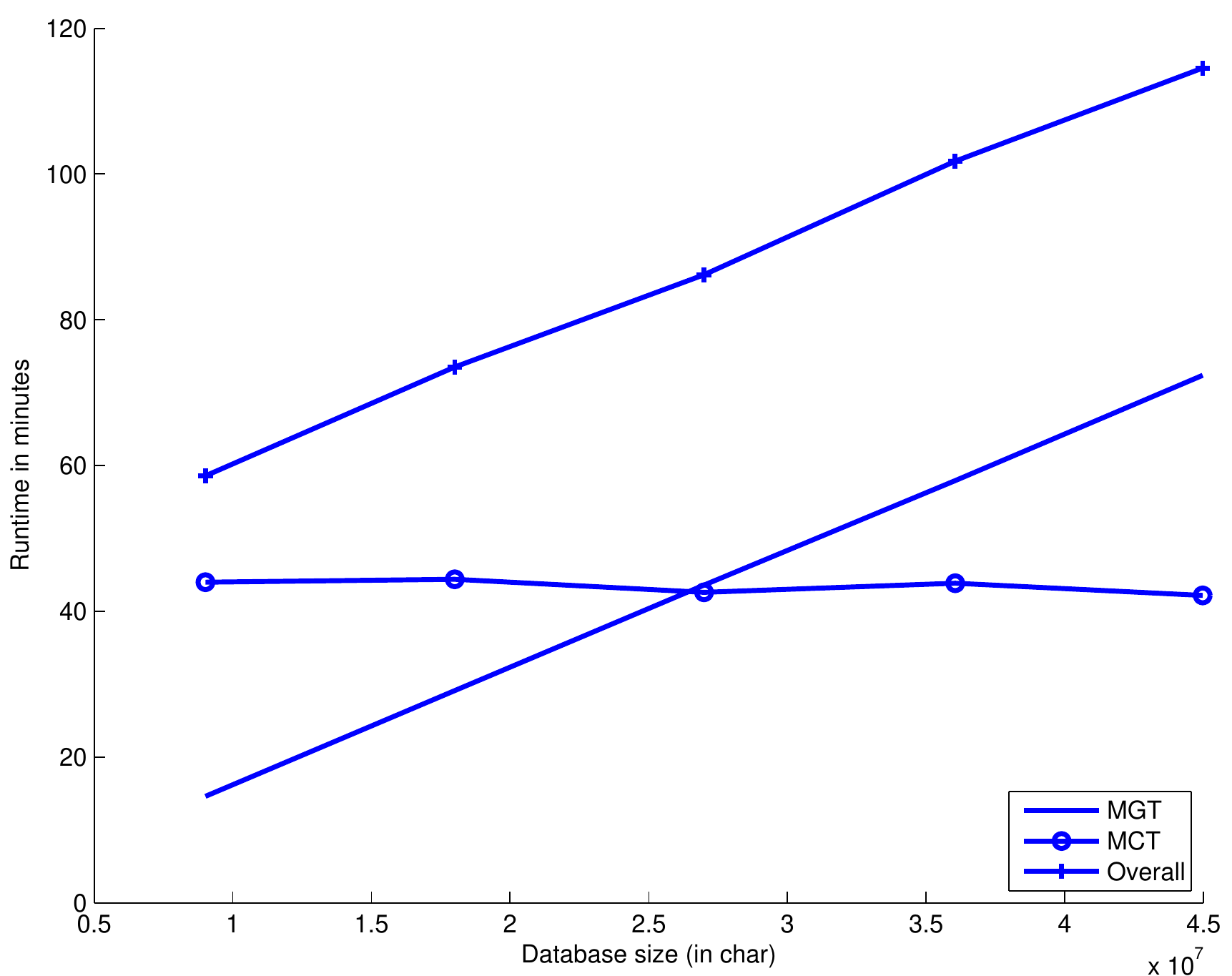}
    \label{fig:expt2_linear_db_consturction_time}
}
\subfigure[The overall build time for LPMS-R scales super-linearly with query workload size]{
    \centering
    \includegraphics[width=0.4\textwidth]{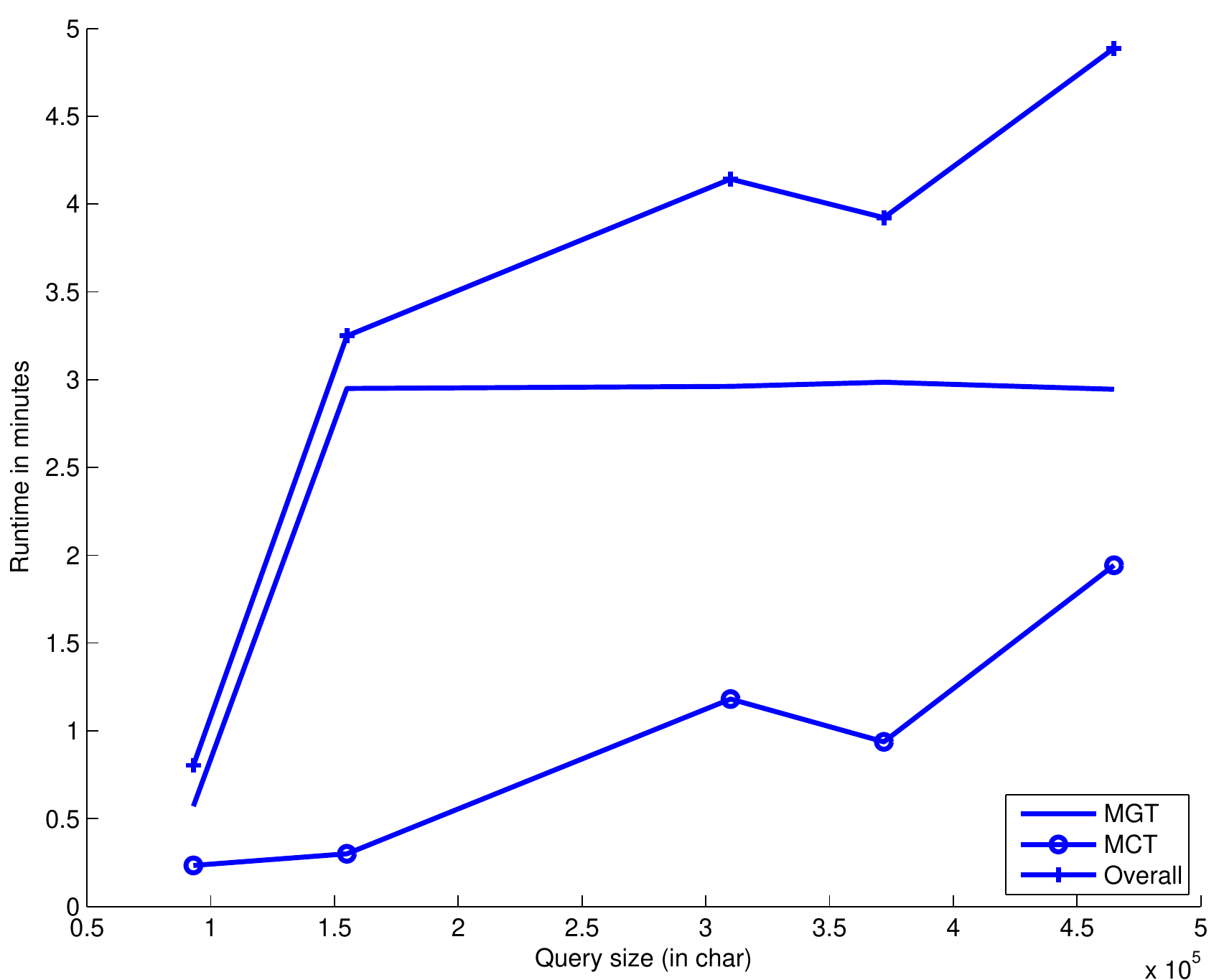}
    \label{fig:expt2_linear_qry_consturction_time}
}
\caption{Experiment 2: Scalability of the LPMS-R construction time as a function of (a) database size; (b) query size. Near-linear scalability suggests that LPMS is a practical method for index generation.}
\end{figure*}

\subsection{Experiment 4: Robustness}
The objective of this experiment is to measure to what degree the index is useful for
queries which were not explictly used for constructing the index. 

We created four different data sets Rob01-Rob04, composed from the alphabet 'A-D', 'A-H', 'A-L'
and 'A-P' respectively. Each data set contains 5,000 records. The details of the data set
are in Appendix 1.

We next defined a query pattern with a key length bounded between three and eight.
For example, the following sample query consists of three query keys and two
gap constraints.

\[
\mbox{
``(HFBFD)(.{0,9})(AEHDCEAG)(.{0,39})(CCCGDAAE)'' }
\]

Using this query pattern, three sample query sets are
generated from the $10\%$, $30\%$ and $50\%$ sample of Rob01,
Rob02, Rob03 and Rob04. Then, LPMS-R and LPMS-D indexes
are constructed using each sample query set and each data set.
This will create twenty-four LPMS multigram indexes.
For example, index `LPMS-R-10' in row two and column 3 of table \ref{expt4_2pct_qry_testes}
is constructed on the data set Rob03, based on the $10\%$ sample query set.\\

For querying we generated five independent test query sets, each from the
$2\%$ sample of Rob01, Rob02, Rob03 and Rob04. This will create
twenty independent test query sets. These test query sets are
tested against the corresponding LPMS-R and LPMS-D indexes.\\

Table \ref{expt4_2pct_qry_testes} reports the test results. It shows
that the recall of all sixty sets of LPMS-D query are equal to $1$
except in a few exceptional occasions. In the LPMS-R
query, the recall decreases as the alphabet size increases. Then,
the recall improves as its corresponding index sample size increases.\\

On the other hand, the precision of the LPMS-R queries are not consistent.
However, the precision of the LPMS-D queries consistently improve as
the sample size increases.\\

These results clearly show that the performance of the LPMS-D multigram
index is depending on the sample size. In other words, for a given 
query pattern, if the sample database size is large enough, the 
LPMS-D multigram index is capable to support arbitrary query and produce
accurate result. In addition, by further increasing the sample size,
the precision will be further improved. This clearly shows that
the LPMS-D is robust.

\begin{table}[t]
\caption{Experiment 4: The recall values of both LPMS-R and LPMS-D 
on test sets on several data sets. See text for explanation. High
recall values suggests that both LPMS-R and LPMS-D are robust against
query perturbations.}
\label{expt4_2pct_qry_testes}
\begin{tabular}{|r|r|r|r|r|}                       \hline 
           &   Rob01 & Rob02 & Rob03 & Rob04 \\ \hline
LPMS-R-10  &       1 &  0.78 &  0.63 &  0.45 \\
LPMS-R-30  &    0.98 &  0.82 &  0.72 &  0.61 \\
LPMS-R-50  &       1 &  0.95 &  0.81 &  0.58 \\ \hline
LPMS-D-10  &       1 &     1 &  0.99 &  0.98 \\
LPMS-D-30  &       1 &     1 &     1 &     1 \\
LPMS-D-50  &       1 &     1 &     1 &     1 \\ \hline
\end{tabular}
\end{table}

\subsection{Case Study: PROSITE Patterns}
So far the previous four experiments have focused on studying the 
performance of the LPMS in a controlled fashion. In this
section, the objective is to apply the LPMS multigram on a real data and contemporary application.

We selected 100K protein sequences from the PFAM-A protein database \cite{pfam_site}. Each sequence 
is composed of 20 distinct amino acids (alphabet) with length ranging from 4 to 2750. The query 
set is made up of $100$ Prosite signatures \cite{prosite-1},
which are downloaded from the PDB \cite{prosite_site}. Each Prosite signature is a regular
expression defining a protein class.

Figure \ref{fig:full_posix_app} shows the data flow of the LPMS-D multigram index Prosite-Protein application.
Step 1 through Step 5 on the right of figure \ref{fig:full_posix_app}   
shows how the LPMS-D multigram index is constructed. While
step A to step E on the left shows how the LPMS-D multigram is
used to process a Prosite query.\\
 
For example, consider the Ribosomal protein S18 (PS000057) from the Prosite pattern \cite{prosite-1}.
\[
\mbox{
``[IVRLP]-[DYN]-[YLF]-x(2,3)-...[RHG]-[LIVMASR]"  }
\]

The S18 protein pattern is first translated into regular expression, and is passed to the
`POSIX Reverse Matching engine' in step A.
\[
\mbox{
``[IVRLP][DYN][YLF].\{2,3\}...[RHG][LIVMASR]"  }
\]
Here multigrams that match the regular expression are retrieved
from index database in step B and C. For example, `IDY', `YYX', ... etc.
In step D, the posting list will generate a list of candidate
protein database records that will potentially match the query S18.
Finally, in step E and F, the regular expression rule matching is
performed, which will produce the query result.

\begin{figure}[t]
    \centering
    \includegraphics[width=1.0\textwidth]{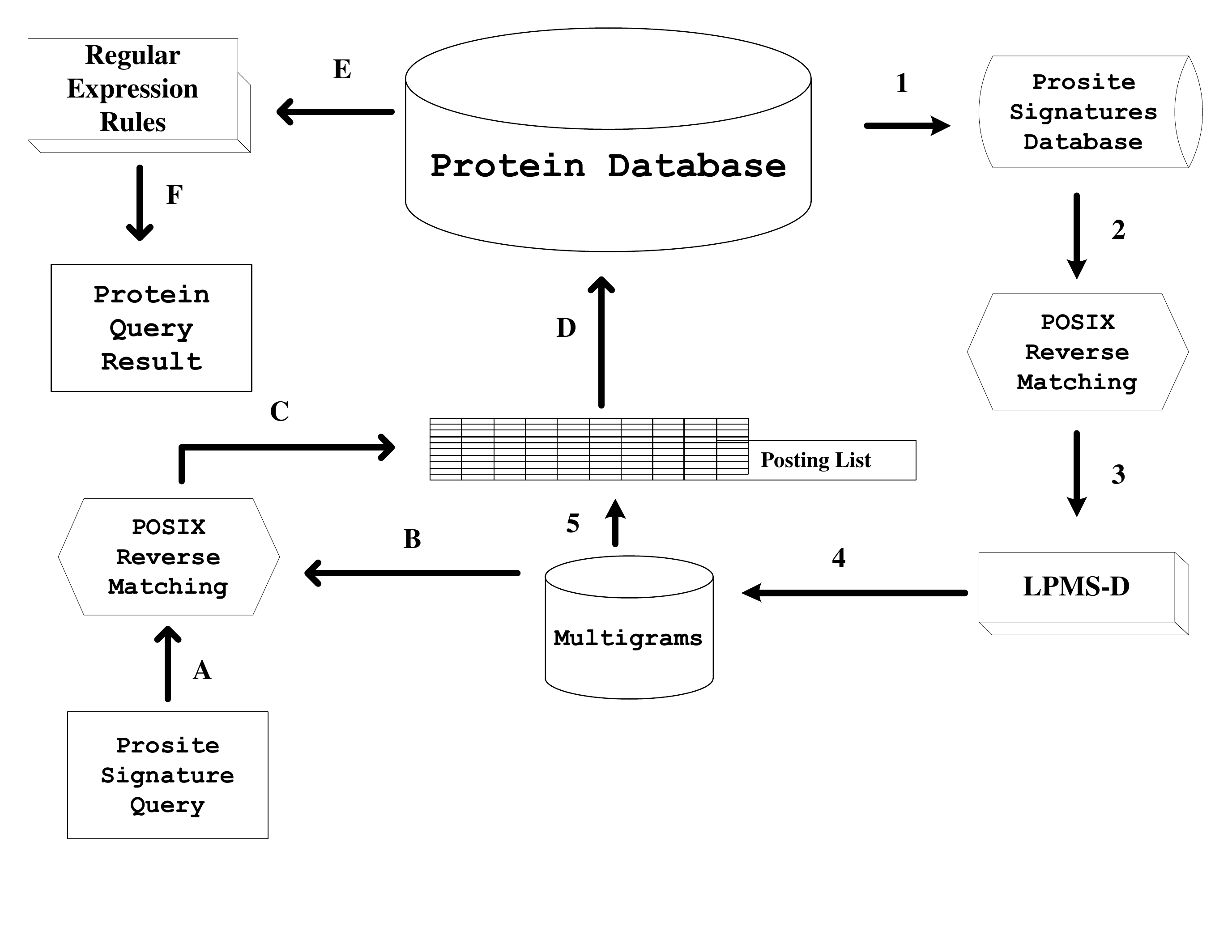}
    \caption{Data flow diagram for the Full POSIX Prosite-Protein application.
The data flow step 1 to step 5 on the right show the LPMS-D Multigram index construction process.
The data flow step A to step E on the left show the LPMS-D Multigram index query process.}
    \label{fig:full_posix_app}
\end{figure}


One hundred Prosite queries were tested against the protein database with and
without the LPMS-D index.  The recall of this test was equal to $1$, which
demonstrates that the LPMS-D index is $100\%$ accurate.

Furthermore, we repeat the experiment by restricting the multigram set allowable
in the index construction to be of size at least three. As expected the recall
was lower and equal to $0.56$. This suggests that by restricting the size of the universe
of allowed multigrams we can balance the trade-off between accuracy and efficiency.

\section{Summary and Conclusion}
While modern database management systems support forms of regular 
expression querying, they do not provide any indexing support for such
queries. Thus, a regular expression query requires 
a full database scan to find the matching records. This is a severe 
limitation as the database size will continue to increase and 
applications for such queries (e.g., bioinformatics) proliferate.

In this paper, we have proposed a robust, scalable and efficient approach
to design an index for such queries. The heart of our approach is
to model the multigram selection problem as an integer program (IP) and 
show that the approximate solutions of the IP have many of the properties we desire: accuracy
robustness and efficiency.  Extensive set of experiments on both synthetic
and real datasets demonstrate our claimed contributions.

For future work we will replace the current linear programming solver by using
a primal-dual approach which will make the approach handle very large query workloads.
Furthermore we will test our approach in other application domains including intrusion
detection systems.

\bibliographystyle{abbrv}
\bibliography{REGEX}  

\section{Appendix 1: Data Sets}
\subsection{Experiment 1}
We generated five data sets using the English alphabet by varying the support distribution
of the multigrams. The different support distribution was created as a function of 
a Normal distribution by varying the standard deviation between 100 and 500. The support
distribution of the multigrams is shown in Figure~\ref{fig:expt1_profile}. 

Based on the support distribution of the multigrams, synthetic databases were reverse engineered
which contained multigrams of prescribed support. Finally, using a query pattern consisting of 
three query keys and two gap constraints, the query workload was generated. The size of the five
databases ranged from 380 - 420K and the size of the query workload ranged between 227 and 248.

\begin{figure}[h]
    \centering
    \includegraphics[width=1.0\textwidth]{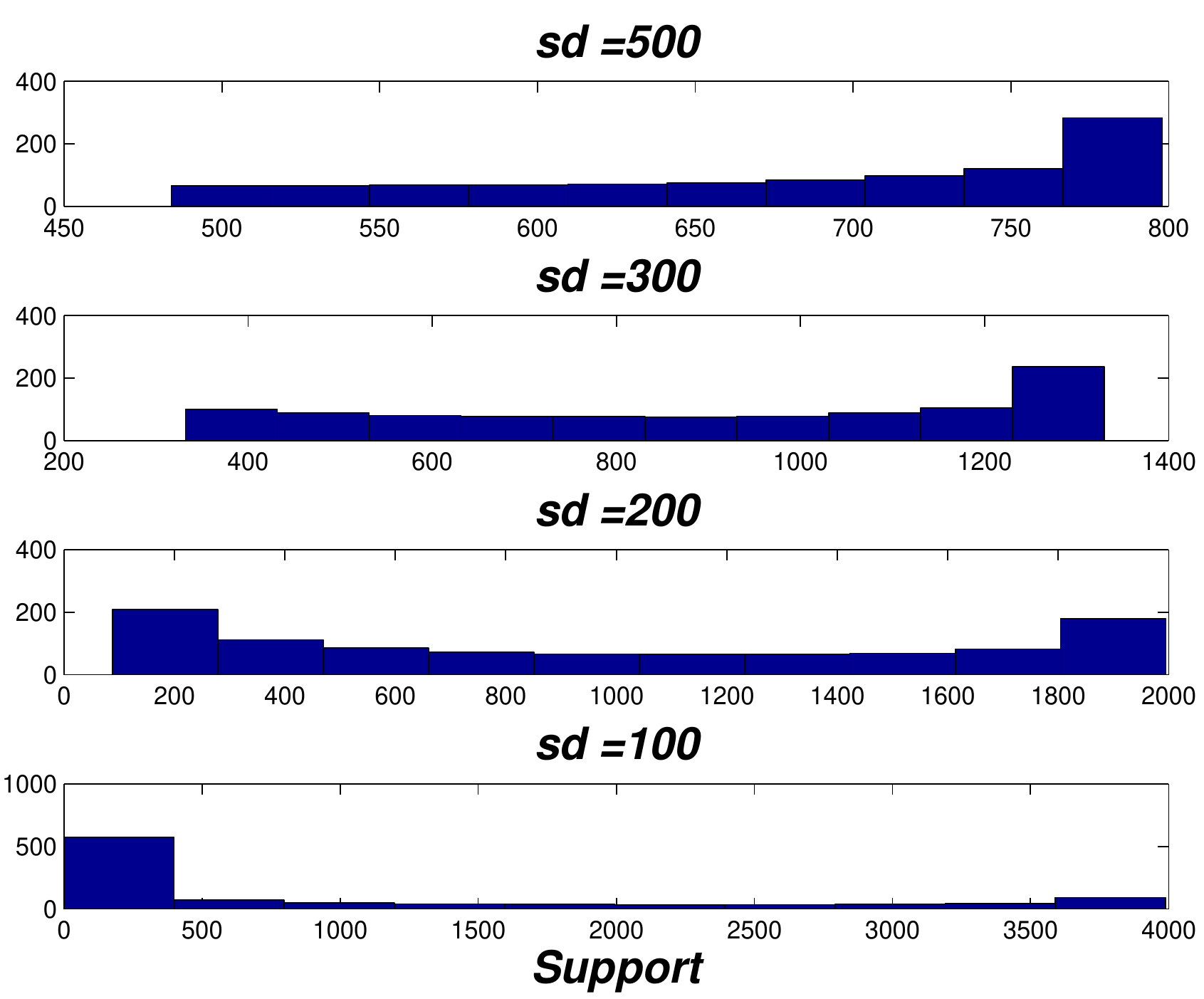}
    \caption{Experiment 1: The support distribution of multigrams was generated and both
database and query workload were reverse engineered to approximately match the multigram  distribution}
    \label{fig:expt1_profile}
\end{figure}

\subsection{Experiment 2}
Five data sets were synthetically generated using a random number generator on the English alphabet. 
Then a database sample of $10\%$ was taken from each of the data set to generate
a query work load. No distribution (of multigrams) was prescribed as in Experiment 1.
The query workload was fixed (to 1000)  and then database size was varied and then
the database size was fixed and the query workload was between between $90K$ and $500K$.
However, the size of the data sets are varied
by increasing changing the data set size, while the query work load is
kept to $1,000$. 
\subsection{Experiment 3}
To test for optimality we sampled a data set generated from Experiment 2.
The data set size was reduced to $20K$ and the query workload to $100$. This
was necessary as integer programming solutions require exponential time in
most instances.

\subsection{Experiment 4}
Data set for the robustness experiment was again 
synthetically generated on the English alphabet. Specifically, data sets 
Rob01, Rob02, Rob03 and Rob04  are composed of
the alphabet `A-D', `A-H', `A-L' and `A-P' respectively.
Each data set contains $5,000$ records with their total database
size and mean record sets are volumetrically similar. 

Next, in experiment 4, three database samples, $10\%$, $30\%$ and $50\%$
are taken from the four Rob data sets which form the basis of twelve LPMS-D
and twelve LPMS-R multigram indexes.   
\end{document}